\DeclareMathAlphabet{\mathpzc}{OT1}{pzc}{m}{it}
\DeclareMathAlphabet{\mathcalligra}{T1}{calligra}{m}{n}
\def\Tr{\operatorname{Tr}} \def\>{\rangle} \def\<{\langle}
\def\N#1{\left|\!\left|{#1}\right|\!\right|} \def\id{\mathsf{id}}
 \def\mE{\mathcal{E}}
\def\mN{\mathcal{N}}
\def\sH{\mathcal{H}}  
\def\bound{\boldsymbol{\mathsf{L}}}
 \def\dec#1{\mathscr{D}}
 \def\openone{\mathds{1}}
\def\rv#1{{\mathbb{#1}}}
\def\mM{\mathcal{M}}
\renewcommand{\qedsymbol}{\nobreak \ifvmode \relax \else
	\ifdim \lastskip<1.5em \hskip-\lastskip \hskip1.5em plus0em
	minus0.5em \fi \nobreak \vrule height0.75em width0.5em
	depth0.25em\fi}
\renewcommand{\ge}{\geqslant}
\renewcommand{\le}{\leqslant}
\newtheorem{theorem}{Theorem}
\newtheorem{proposition}{Proposition}
\newtheorem{definition}{Definition}
\theoremstyle{remark}
\theoremstyle{definition}
\newcommand{\bea}{\begin{eqnarray}}
\newcommand{\eea}{\end{eqnarray}}
\newcommand{\be}{\begin{equation}}
\newcommand{\ee}{\end{equation}}
\def\N{\mathsf{N}}
\def\D{\mathsf{D}}
\def\sHS{\mathcal{H}_S}
\def\Pr{\operatorname{Pr}}
\begin{document}
	
	
	
	\title{\textbf{Noise and disturbance in quantum measurements:
			an information-theoretic approach}}

	\author{Francesco Buscemi}
	
	\affiliation{Institute for Advanced Research, Nagoya University, Chikusa-ku, Nagoya 464-8601, Japan}
	
	\email{buscemi@iar.nagoya-u.ac.jp}
	
	\author{Michael J. W. Hall}
	
	\affiliation{Centre for Quantum Computation and Communication Technology (Australian Research Council), Centre for Quantum Dynamics, Griffith University, Brisbane, QLD 4111, Australia}
	
	\email{michael.hall@griffith.edu.au}
	
	\author{Masanao Ozawa}
	
	\affiliation{Graduate School of Information Science, Nagoya University, Chikusa-ku, Nagoya 464-8601, Japan}
	
	\email{ozawa@is.nagoya-u.ac.jp}
	
	\author{Mark M. Wilde}
	
	\affiliation{Hearne Institute for Theoretical Physics, Department of Physics and Astronomy, Center for Computation and Technology, Louisiana State University, Baton Rouge, Louisiana 70803, USA}
	
	\email{mwilde@lsu.edu}
	
	\begin{abstract}
		We introduce information-theoretic definitions for noise and
		disturbance in quantum measurements and prove a state-independent
		noise-disturbance tradeoff relation that these quantities have to satisfy in
		any conceivable setup. Contrary to previous approaches,
		the information-theoretic quantities we define are invariant under relabelling of outcomes and allow for the possibility of using quantum or classical operations to `correct' for the disturbance. We also show how our bound implies strong tradeoff relations for mean square deviations.
		
	\end{abstract}
	
	\maketitle
	
	
	



Heisenberg's Uncertainty Principle (HUP) states, loosely speaking,
that in quantum theory a measurement process cannot measure one
observable accurately, such as the position, without causing a measurable disturbance to
another incompatible observable, such as the momentum. 
Notwithstanding the crucial role played by Heisenberg's principle in modern science,
it took a long time between its first exposition~\cite{H27,H83} and
its rigorous formalisation in terms of \emph{noise and disturbance
  operators}~\cite{Oz03,Ozawa-more1,Ozawa-more2}. The statistical spreads of these operators are measurable quantities, and hence tradeoff relations satisfied by these
spreads yield  precise  mathematical translations of Heisenberg's
intuition~\cite{Oz03,Oz04,B13}, that have recently been experimentally tested in a number of scenarios \cite{ESSBOH12,Roz12,WHPWP13,ozawa-exper1,ozawa-exper2,KBOE13,RBBFBW13}.
The use of noise and disturbance operators allows for a detailed,
state-dependent formulation of HUP, able to capture the idea of `how accurate'
 a measurement is with respect to one dynamical variable and `how delicate' the same measurement is with respect to another dynamical variable. 

In this paper, we will explore a different approach to HUP,
focused not on the change \emph{per se} in a system's dynamical
variables, but on the \emph{loss of correlation} introduced by this change. In doing
so, we will make use of ideas from information theory such as a
`guessing strategy' and error correction, and our definitions will
be given in terms of information-theoretic quantities like entropies
and conditional entropies.  While we focus on the noise-disturbance context here, our approach also yields tradeoff relations for joint measurements.

In order to understand the difference between the present approach and
the previous one, let us consider, for example, the case of
noise. While the noise, in its conventional form of root-mean-square deviation, is a statistical measure of the  \emph{distance} between a given system observable and the quantity actually measured \cite{Oz04,H04}, here we will only be interested in how well one can \emph{infer} (i.e., guess) the value of a system observable from
a given measurement outcome.  That is to say, we will look only at the degree of correlation between the measurement and the observable, irrespective of how the corresponding outcomes and values are numerically labelled.

Analogously, when characterising the
disturbance, we will consider the measurement process as a source of
noise for the system, and the degree to which such noise can be corrected (for a given observable)
 will give us our definition of disturbance.

Our measures of noise and disturbance will therefore 
quantify the \emph{unavoidable} loss of correlations, i.e., the irreversible components of noise and disturbance. In particular, our definitions and results are invariant under reversible operations, such as relabelling of outcomes and unitary
time evolutions. In contrast, the conventional approach using root-mean-square deviations is not invariant, as such operations can change numerical values and indeed the system observables of interest.

Information-theoretic approaches to HUP-like questions have already
been proposed in a variety of forms~\cite{M07,BHH08,BH2009,WHBH12}.
However, these focus on the disturbance of the system state {\it per se}, 
with tradeoff relations which are functions only of the
initial state of the system and the measuring apparatus. In contrast, we define noise and disturbance with respect to two system
observables, and our tradeoff relation depends on the degree to which
such observables are compatible, in the spirit of the original
HUP.  Moreover, our definitions are functions only of the two observables and
the measuring apparatus, leading to a state-independent
tradeoff relation. 

We note that a state-independent noise-disturbance relation has recently been given, for the case of position and momentum observables, in the conventional context of root-mean-square noise and disturbance \cite{BLW13}.  In contrast, our information-theoretic relation applies to arbitrary observables and leads to stronger results.

\textit{Our proposal.}---For simplicity, consider two nondegenerate observables
$X$ and $Z$ of a finite-dimensional quantum system $S$, with corresponding sets of eigenstates $\{|\psi^x\>\}$ and $\{|\varphi^z\>\}$, respectively. The system is subjected to
a measuring apparatus $\mM$.  
Our aim is to introduce an operational context for what it means for
`$\mM$ to measure $X$ accurately,' and for 
`$\mM$ to disturb a subsequent measurement of
$Z$.' This will lead to sensible information-theoretic definitions of \emph{noise} and \textit{disturbance}, $\N(\mM,X)$ and $\D(\mM,Z)$, in terms of operational measurement statistics, which satisfy:

\begin{theorem}\label{theo:main}
  For any measuring apparatus $\mM$ and any non-degenerate observables
  $X$ and $Z$, the following tradeoff between
  noise $\N(\mM,X)$ and disturbance
  $\D(\mM,Z)$ holds
	\begin{equation}\label{eq:tradeoff}
	\N(\mM,X)+\D(\mM,Z)\ge-\log c,
	\end{equation}
	where
$c:=\max_{x,z}|\<\psi^x|\varphi^z\>|^2$ and the $\operatorname{log}$ is in base 2.
\end{theorem}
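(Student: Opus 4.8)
The plan is to reduce both quantities to conditional entropies evaluated on a single entangled state and then invoke the tripartite entropic uncertainty relation with quantum memory. First I would dilate the apparatus $\mM$ to a Stinespring isometry $V:\sH_S\to\sH_{S'}\otimes\sH_M\otimes\sH_E$, where $S'$ is the post-measurement system left to the experimenter, $M$ is the (classical) outcome register, and $E$ is the inaccessible environment. I then prepare a maximally entangled state $|\Phi\>_{RS}$ between $S$ and a reference $R$ of the same dimension and apply $V$ to $S$, obtaining a pure state $|\Psi\>_{RMS'E}$. Because every local operation on $R$ commutes with $V$, this one state simultaneously encodes both experiments: measuring the basis $\{|\psi^x\>\}$ on $R$ (up to a complex conjugation, which leaves all overlaps invariant) is equivalent to feeding $\mM$ a uniformly random $X$-eigenstate while recording its label, and likewise for $\{|\varphi^z\>\}$ and $Z$.

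Next I would identify the two operational quantities with conditional entropies of $|\Psi\>$. For the noise, recording the true $X$-label on $R$ and reading the apparatus outcome $M$ gives $\N(\mM,X)=H(X_R|M)$, since the ability to guess $X$ from the outcome is exactly governed by this conditional Shannon entropy. For the disturbance, the key step is a data-processing argument: any correction channel applied to $S'$ followed by a $Z$-measurement produces a classical guess that is a function of $S'$, so by monotonicity of conditional entropy under local processing, $\D(\mM,Z)\ge H(Z_R|S')$, where the right-hand side uses $S'$ as quantum memory. Moreover, discarding a system can only increase conditional entropy, so $H(Z_R|S')\ge H(Z_R|S'E)$. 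A useful observation is that $M$ is classical, hence freely copyable; thus even if the corrector is allowed to use $M$ alongside $S'$, one may hand it a copy and absorb that copy into the second memory below without touching the noise term, so the bound is unaffected.

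Finally I would apply the tripartite entropic uncertainty relation with quantum memory to the pure state $\rho_{R,M,S'E}=|\Psi\>\<\Psi|$, taking the measured system to be $A=R$, the first memory $B=M$, and the second memory $C=S'E$. Because $RMS'E$ is pure and $\{|\psi^x\>\},\{|\varphi^z\>\}$ are rank-one projective measurements whose pairwise overlaps equal those of the conjugate bases on $R$, this relation reads
\begin{equation}
H(X_R|M)+H(Z_R|S'E)\ge-\log c,
\end{equation}
with the same $c=\max_{x,z}|\<\psi^x|\varphi^z\>|^2$. Chaining the three estimates then yields
\begin{equation}
\N(\mM,X)+\D(\mM,Z)\ge H(X_R|M)+H(Z_R|S'E)\ge-\log c,
\end{equation}
which is the claimed tradeoff.

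I expect the main obstacle to be the reduction itself rather than the final inequality. One must verify carefully that the operational definitions of $\N$ and $\D$, phrased via preparing ensembles, guessing, and correcting, coincide with (or are lower-bounded by) the conditional entropies $H(X_R|M)$ and $H(Z_R|S'E)$ in the entangled-reference picture. The delicate points are getting the complex-conjugation bookkeeping right so that the overlap constant is preserved exactly as $c$, and ensuring the data-processing inequalities point the correct way, namely that handing the decoder the full quantum system $S'$ (and even the environment) can only help, so that the abstract conditional entropy lower-bounds every concrete correction strategy. Once these identifications are secured, the tripartite uncertainty relation supplies the bound immediately, and the state-independence of \eqref{eq:tradeoff} is manifest since $R$ is prepared in a fixed maximally entangled state.
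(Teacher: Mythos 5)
Your argument is sound in outline, but it follows a genuinely different route from the paper's own proof of Theorem~\ref{theo:main}: it is, in essence, the paper's proof of the \emph{stronger} refinement in Eq.~(\ref{eq:quantum-bound}). The paper proves Theorem~\ref{theo:main} without any Stinespring dilation or quantum conditional entropies: it bundles the outcome $M$ and the corrected estimate $\hat Z$ into a single POVM $\{\Pi^u\}$ on $S$, uses the ricochet property of $|\Phi^+_{RS}\>$ to convert the two preparation experiments into measurements of $X^T$ and $Z^T$ on the ensemble of post-measurement reference states $\{\rho^u_R;p(u)\}$, applies the original Maassen--Uffink relation state-by-state to that ensemble, and finishes with \emph{classical} data processing. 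That route is more elementary (it needs only the 1988 relation and immediately yields the joint-measurement corollary, Eq.~(\ref{eq:intrinsic-Hall}), and the degenerate-observable generalisation by swapping in other entropic uncertainty relations). Your route instead invokes the tripartite quantum-memory uncertainty relation of Berta \emph{et al.}, which buys you the intermediate bound $\N(\mM,X)+H(Z|S'M)\ge-\log c$ --- computable without optimising over correction channels --- at the cost of heavier machinery.

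One step needs tightening. As written, you assert $\D(\mM,Z)\ge H(Z_R|S')$ and then weaken to $H(Z_R|S'E)$; but the corrector in Definition~\ref{def:dist} acts on \emph{both} $S'$ and $M$, so quantum data processing only gives $\D(\mM,Z)\ge H(Z_R|S'M)$, which is \emph{not} bounded below by $H(Z_R|S')$. The correct chain is $\D(\mM,Z)\ge H(Z_R|S'M)=H(Z_R|S'\bar M)\ge H(Z_R|S'E\bar M)$, where $\bar M$ is the environment's perfect copy of the classical register $M$ (guaranteed to exist because $M$ is decohered in the channel output, and made explicit by taking the dilation $V:\sH_S\to\sH_{S'}\otimes\sH_M\otimes\sH_E\otimes\sH_{\bar M}$), after which the tripartite relation with partition $R$--$M$--$S'E\bar M$ closes the argument. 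You do flag exactly this fix with your ``freely copyable $M$'' remark, so the gap is presentational rather than fatal, but the displayed chain $\D(\mM,Z)\ge H(Z_R|S'E)$ should be replaced by the one through $H(Z_R|S'M)$ to be airtight. The complex-conjugation bookkeeping you worry about is indeed harmless: $|\<\psi^x{}^*|\varphi^z{}^*\>|^2=|\<\psi^x|\varphi^z\>|^2$, so the constant $c$ is preserved exactly.
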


Theorem~\ref{theo:main} clearly expresses the idea that, whenever observables $X$ and $Z$ are not compatible (i.e., $c<1$), it is impossible to accurately measure one of them without at the same time disturbing the other.  Significant generalizations and applications of this result will be given further below.

In order to proceed, we imagine two corresponding correlation experiments that can be
performed with $\mM$. The first experiment consists of a source producing eigenstates
  $|\psi^x\>$ of $X$ at random; feeding these states into the
  apparatus $\mM$; and determining how correlated the observed outcomes
  $m$ are with the eigenvalues $\xi_x$ of $X$ (Fig.~\ref{fig:noise}). If it is
  possible, from $m$, to guess $\xi_x$ perfectly, then there is
  perfect correlation, and we say that $\mM$ can measure $X$
  accurately, by making the corresponding optimal guess \cite{ [{The state-{\it dependent} definition of precise
      measurement consistent with the present discussion has been given
      in }] Ozawa-perfect1, *Ozawa-perfect2}. Hence, the noise
  $\N(\mM,X)$ will be zero. 
	In general, the
  noise will increase as the probability of correctly guessing $\xi_x$
  decreases. The experiment thus assesses the average performance of
  the apparatus in discriminating between different values of $X$.
\begin{figure}[h]
  	\begin{center}
  		\includegraphics[width=.75\columnwidth]{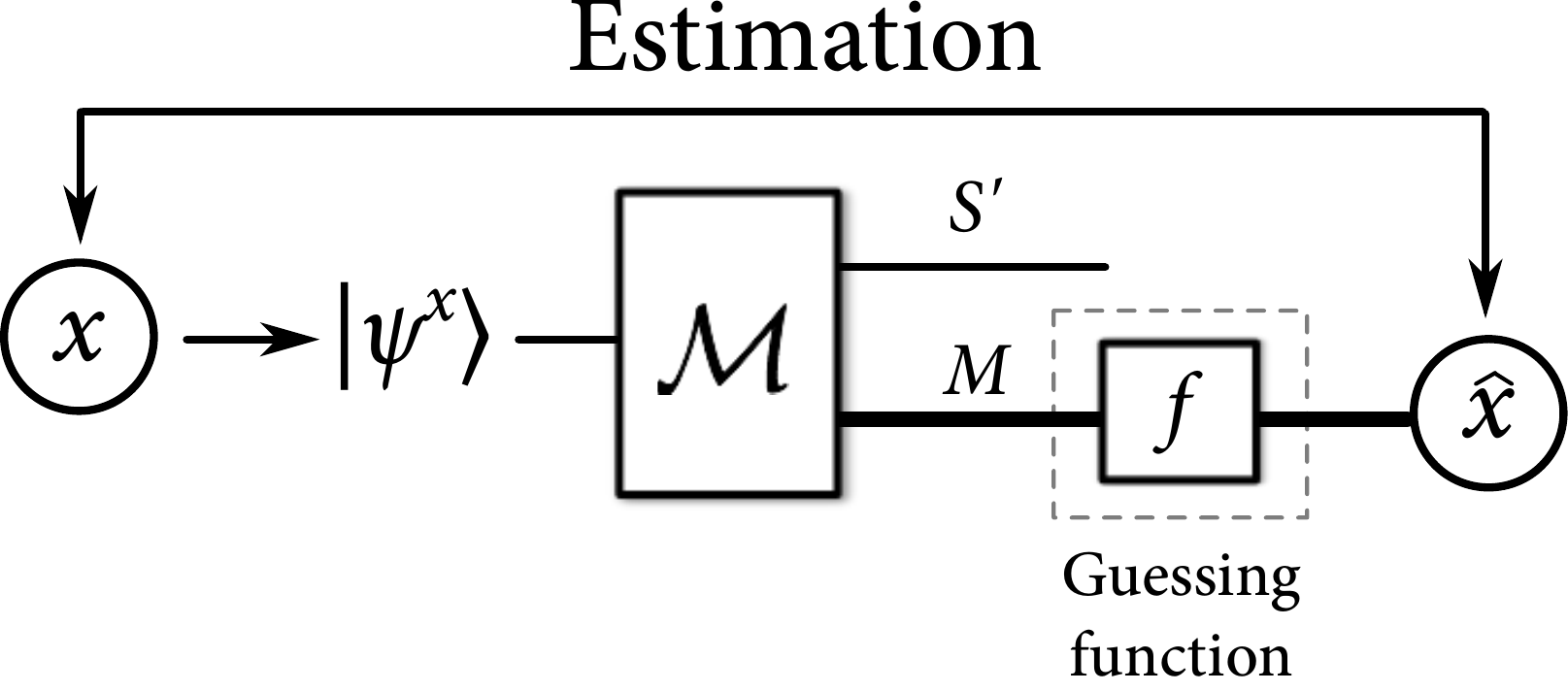}
  	\end{center}
  	\caption{Noise with respect to $X$ is measured by the ability to guess correctly, from the measurement outcome  $M=m$, in which eigenstate $|\psi^x\>$ of $X$ the system was initially prepared. The guessed value, $\hat x$, will in general be some function $f(m)$ of the measurement outcome. Optimisation over $f$ is allowed.}
  	\label{fig:noise}
  \end{figure}
\begin{figure}[h]
  	\begin{center}
  		\includegraphics[width=.75\columnwidth]{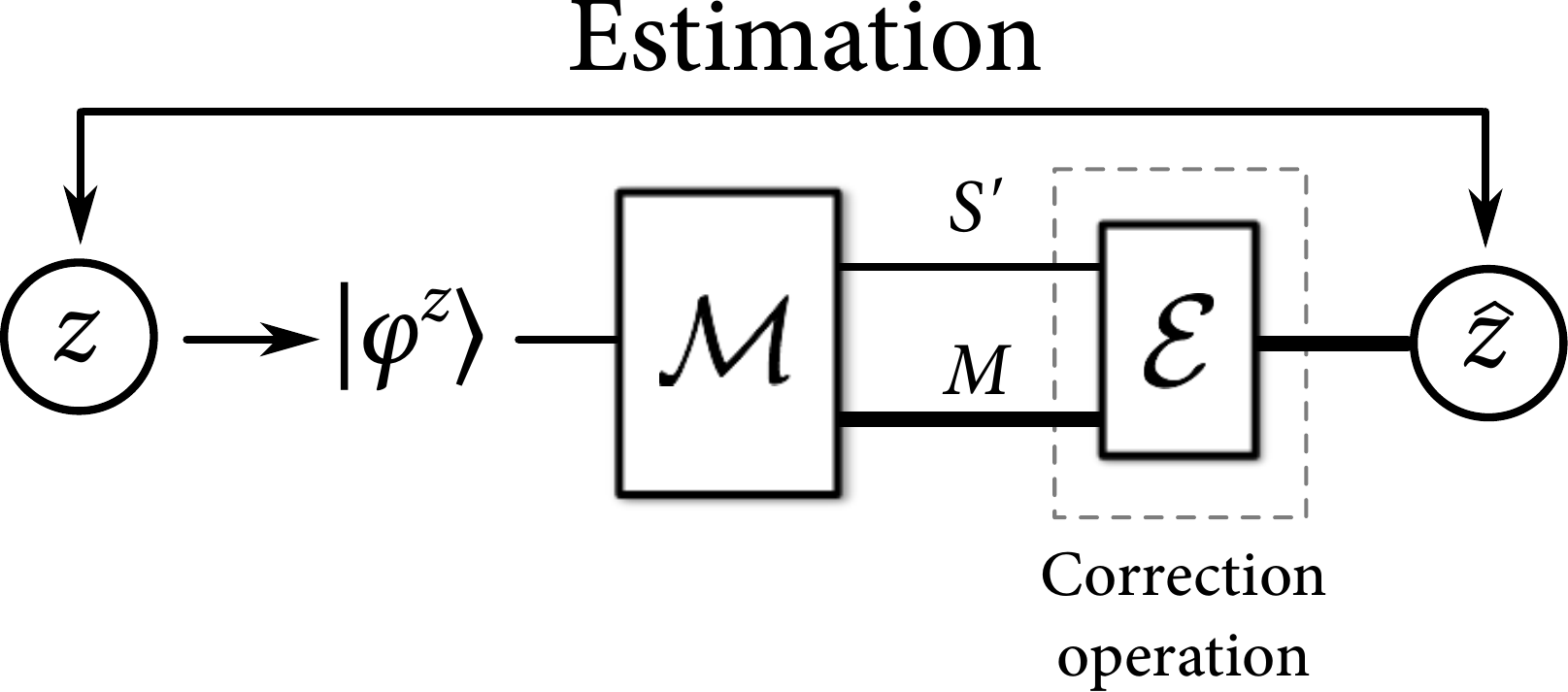}
  	\end{center}
  	\caption{Disturbance with respect to $Z$ is measured by the ability to guess correctly, from the outcome variable $M$ and the `disturbed' output quantum system $S'$, in which eigenstate $|\varphi^z\>$ of $Z$ the system was initially prepared. It is permitted to apply a quantum operation $\mE$ to attempt to correct or minimise any disturbance,  prior to a measurement of $Z$ (with outcome $\hat z$). Optimisation over $\mE$ is allowed.}
  	\label{fig:disturbance}
  \end{figure}
	
In the second experiment, we imagine the source instead producing
  eigenstates $|\varphi^z\>$ of $Z$ at random, and feeding these states through the apparatus $\mM$ (Fig.~\ref{fig:disturbance}). The task is then to
  guess, as accurately as possible, the eigenvalue $\zeta_z$ of the input state $|\varphi^z\>$. We first permit
an arbitrary operation $\mE$, on \textit{both} the classical outcome $m$ and the `disturbed' output quantum system $S'$, to allow for the possibility of `correcting' any reversible disturbance by $\mM$, before making a guess $\hat z$ corresponding to the outcome of a measurement of $Z$. If it is possible to guess perfectly, then there is perfect correlation between $\hat z$ and the input eigenvalue $\zeta_z$, and we say that $\mM$ does not disturb $Z$. Hence, the disturbance $\D(\mM,Z)$ will be zero. In general the disturbance will increase as the correlation decreases.

The main difference between the first and the second correlation experiments is that, in the second one, we are allowed
 to use both the classical outcome observed \emph{and} the output
 quantum system $S'$. This is because  `disturbance' can only be meaningfully defined with
 respect to a measurement of $Z$ that happens \emph{after} the
 measurement process described by $\mM$ has occurred. Thus one is
 allowed to base a guess on all of the data that emerges from the
 measuring apparatus. Alternatively, the second
 experiment can be understood in terms of `error correction': before guessing
 $z$, one tries to `undo,' as accurately as possible, the action of
 the apparatus, seen as a \emph{noisy channel} with both quantum and
 classical outputs.

The notion of disturbance we consider is, therefore, related to the
`irreversible' character of a quantum measurement: any reversible
dynamical evolution is automatically corrected during the
correction stage. 
It therefore captures the
idea of `unavoidable' disturbance, in strong contrast to
the conventional formulation in terms of root-mean-square deviations,
where \emph{any} change in the value of a system's dynamical variables
is considered as a non-trivial disturbance.

\textit{Quantifying noise.}---As discussed above,
we require the information-theoretic noise, $\N(\mM,X)$,
to represent the quality of the correlation between which eigenstate
of $X$ was input and the measurement outcome $m$. For a given input $|\psi^x\>$, this correlation is determined by the conditional probability distribution $p(m|\psi^x)$, which can be measured via the experimental setup in Fig.~1.  Since we are interested in the average noise performance, we have to introduce an
{\it a priori} distribution on the eigenstates $|\psi^x\>$. By fixing
the prior to be the uniform one, i.e., $p(x)=1/d$ where $d$ denotes the dimension of the Hilbert space of the system, 
we obtain the joint input-output probability distribution 
\begin{equation}\label{eq:pmx}
p(m,x)=p(x)\,p(m|\psi^x)=\frac 1d\, p(m|\psi^x).
\end{equation}  
This characterises the correlation between $X$ and $M$, and leads to:

\begin{definition}[Information-Theoretic Noise]\label{def:noise}
  The information-theoretic noise of the instrument $\mM$ as a
  measurement of $X$ is defined as
$\N(\mM,X):=H(X|M)$, where $H(X|M)$ denotes the conditional entropy computed from the joint probability distribution $p(m,x)$ in Eq.~(\ref{eq:pmx}).
\end{definition}

Note that $\N(\mM,X)$ can be interpreted as the average uncertainty as to which eigenvalue of $X$ was input, given the outcome of the measurement.
Our definition can be further justified, in the precise sense that
the noise $\N(\mM,X)$ is small if and only if the
outcome $m$ identifies the eigenvalues of $X$ accurately. In fact, as we show in the Supplemental Material \cite{SM}, standard arguments in information theory~\cite{F61,G68,HR70,HV10} imply that, given a guessing function 
$\hat x=f(m)$ with its total error probability $p_e:=\Pr\{\hat X\neq X\}=\sum_{x}\sum_{\hat x\neq x}p(\hat x,x)$, it holds that
\begin{equation}\label{eq:fano}
\N(\mM,X)\to 0\ \textrm{iff}\ \min_fp_e\to 0.
\end{equation}

\textit{Quantifying disturbance.}---Let us now consider a second nondegenerate observable
$Z=\sum_z\zeta_z|\varphi^z\>\<\varphi^z|$ of system $S$. While the
noise depends only on the measurement outcome, the disturbance can
depend in principle on both the classical outcome ($M$) and the quantum
output system ($S'$) of $\mM$. However, the conceptual framework 
is analogous to that for noise: we imagine that eigenstates
of $Z$ are acted upon by the measurement process $\mM$. We then
require the \emph{information-theoretic disturbance} to quantify
the extent to which the action of $\mM$ reduces the information about
which eigenstate $|\varphi^z\>$ was initially selected.

However, if we want to quantify truly \emph{unavoidable} disturbance,
we have to allow any possible action aimed at recovering this
information, after the measurement process $\mM$ has taken
place. We therefore allow an optimisation over all possible
correction procedures, before any attempt to estimate $z$ is
conducted. A general correction procedure is modelled by
a completely-positive trace-preserving (CPTP) map $\mE$, reconstructing the initial system $S$ from the output system $S'$ and the measurement record $M$ (Fig.~\ref{fig:disturbance}). The final estimation of $z$ can then be performed via a standard (i.e., von Neumann) measurement of $Z$, since any additional optimisation can be incorporated into the correction channel $\mE$, and no more than $d$ outcomes are needed to discriminate between the input eigenstates. The
information-theoretic disturbance will therefore depend on the joint
probability distribution given by
\begin{equation}\label{eq:pzz}
p(\hat z,z):= p(z)\,p(\hat z|\varphi^z) = \frac 1d \,p(\hat z|\varphi^z),
\end{equation}
which characterises the correlation between $z$ and $\hat z$. In the above equation, as we did before for the case of noise, we are selecting the eigenstates of $Z$ uniformly at random.

We can now formalise the above discussion as follows:
\begin{definition}[Information-Theoretic Disturbance]
\label{def:dist}
  The information-theoretic disturbance that the apparatus $\mM$
  introduces on any subsequent attempt to measure the observable $Z$,
  is defined as
$\D(\mM,Z):=\min_\mE H(Z|\hat Z)$, where the conditional entropy $H(Z|\hat Z)$ is computed from the joint probability distribution $p(\hat z,z)$ in Eq.~(\ref{eq:pzz}), and the minimum is taken over all possible CPTP maps $\mE$.
\end{definition}

As for the information-theoretic noise above, this measure quantifies the average uncertainty of $Z$, given the outcome of the estimate. Eq.~(\ref{eq:fano}) can similarly be applied to justify our definition.
	In fact, besides Eq.~(\ref{eq:fano}), the notion of disturbance we have introduced can be given an alternative interpretation, directly related to the idea that the measurement process irreversibly disturbs the measured system. Defining the probability of error as $p_e=\sum_{z}\sum_{\hat z\neq z}p(\hat z,z)$, the probability of  guessing correctly, $1-p_e$, is nothing but the average fidelity of correction, i.e.,
	$1-p_e=d^{-1}\sum_z\operatorname{F}\big\{(\mE\circ\mM)(|\varphi^z\>\<\varphi^z|)\ ,\  |\varphi^z\>\<\varphi^z|\big\}$,
	where $\operatorname{F}\{\rho ,\sigma\}:=\Tr\left[\sqrt{\sqrt{\rho}\sigma\sqrt{\rho}}\right]^2$ is the fidelity between states $\rho$ and $\sigma$ \cite{U73,J94}.

\textit{Information-theoretic noise-disturbance relation.}---Defs.~\ref{def:noise} and~\ref{def:dist} lead  to the noise-disturbance relation~(\ref{eq:tradeoff}), as shown in the Supplemental Material \cite{SM}.  The proof is based on a mapping of the statistics of the two estimation procedures in Figs.~\ref{fig:noise} and~\ref{fig:disturbance} (which require separate inputs of eigenstates of $X$ and $Z$), to the measurement statistics of a single maximally-entangled state, and applying the Maassen-Uffink entropic uncertainty relation~\cite{Maassen} to this state.

\textit{Useful quantum lower bound on disturbance.}---Given an apparatus $\mM$ and two observables $X$ and $Z$, both the noise
$\N(\mM,X)$ and disturbance $\D(\mM,Z)$ can in principle be
computed. However, while the noise can be computed directly from the
data, the definition of disturbance involves an optimisation over all possible
correction procedures. Such an optimisation can in general be very
hard to perform. Such a problem can however be encompassed simply by
noticing that any correction
followed by an estimation of $Z$ is nothing but a post-processing of
systems $S'$ and $M$ into the estimated variable $\hat Z$. We can therefore apply the quantum
data-processing inequality~\cite{LR73,U77,SN96} to arrive at the
lower bound $\D(\mM,Z)\ge H(Z|S'M)$ for the disturbance, where the conditional \textit{quantum} entropy $H(A|B)$ is defined as the difference of the von Neumann entropies corresponding to  the combined system $AB$ and system $B$. As we prove in the Supplemental Material \cite{SM}, it is also possible to refine the bound of Theorem~\ref{theo:main} to
\begin{equation}\label{eq:quantum-bound}
\N(\mM,X)+H(Z|S'M)\ge-\log c.
\end{equation}
 The idea is that one can use the Stinespring representation theorem~\cite{S55} to purify the action of the measurement channel $\mM$ to an isometry $V:\sH_S\to\sH_{S'}\otimes\sH_M\otimes\sH_E\otimes\sH_{\bar M}$, where the additional systems $E$ and $\bar M$ represent the environment and the environment's redundant copy of $M$, respectively, and then apply the recently discovered complementarity relations in the presence of quantum memory~\cite{BCCRR10,CYGG11}.

\textit{Generalisations.}---Since the proof of the tradeoff relation (\ref{eq:tradeoff}) only requires that $\hat X$ and $\hat Z$ are joint estimates of $X$ and $Z$ \cite{SM}, it follows as an immediate corollary that 
\begin{equation}\label{eq:intrinsic-Hall}
\N(\mM,X) + \N(\mM,Z) \ge -\log c
\end{equation}
for any joint estimate of $X$ and $Z$ via measurement apparatus $\mM$. This information-theoretic {\it joint-measurement} tradeoff relation  may be contrasted to the joint-measurement information exclusion relation of Hall \cite{H97}.  In particular, contrarily to the latter, Eq.~(\ref{eq:intrinsic-Hall}) is state \emph{independent}, i.e, it constrains the {\it inherent} degree to which the measurement apparatus can simultaneously perform as an $X$-measuring device and a $Z$-measuring device.

Moreover, we can generalise the tradeoff relation in the theorem to degenerate observables $X$ and $Z$. One can use essentially the same arguments as in the non-degenerate case, with a suitable entropic uncertainty relation~\cite{KP02}, to replace the constant $c$ in Theorem~\ref{theo:main} with $c'=\max_{x,z} \|X_xZ_z\|^2_\infty$, where $X_x$ and $Z_z$ are the \textit{spectral projectors} corresponding to distinct eigenvalues of $X$ and $Z$, respectively (see Supplemental Material \cite{SM}).

Our information-theoretic approach also yields tradeoff relations for root-mean-square deviations.  In particular, for the joint probability distribution $p(m,x)$ in Eq.~(\ref{eq:pmx}), consider the alternative measure of noise defined by $V_{\N}^X := \sum_{m,x} p(m,x) [\hat x - \xi_x ]^2$,
where $\hat x=f(m)$ is the estimate of $x$ from measurement outcome $m$.  Thus, this measure is just the mean square deviation (MSD) of the estimate of the input eigenvalue from its true value.  One may similarly define a measure of disturbance by the MSD
$V_{\D}^Z := \sum_{\hat z,z} p(\hat z,z) [\hat z -\zeta_z]^2$.
As shown in the Supplemental Material \cite{SM}, these quantities are equal to squares of the noise and disturbance measures defined by Ozawa \cite{Oz03}, for the particular case of a maximally-mixed system state $\rho_S=d^{-1}\openone_S$.  If the spacing between eigenvalues of $X$ ($Z$) is a multiple of some value $s_X$ ($s_Z$), then the tradeoff relation 
\begin{equation} \label{variance}
\left[ V_{\N}^X + \frac{(s_X)^2}{12} \right] \,  \left[ V_{\D}^Z + \frac{(s_Z)^2}{12} \right] \ge \left( \frac{s_X s_Z}{2\pi e \,c}\right)^2, 
\end{equation}
follows as a corollary to Theorem~\ref{theo:main} (see Supplemental Material \cite{SM}). It follows, for example, that if $X$ and $Z$ are the Pauli spin operators for a qubit system, then $ V_{\N}^X$ and $V_{\D}^Z$ cannot both vanish.  This cannot, in contrast, be concluded from known noise-disturbance tradeoff relations for the maximally-mixed state \cite{Oz03,B13}.

A generalisation to continuous observables is not straightforward operationally, as the corresponding eigenkets are not physical states.  However, as described in the Supplemental Material \cite{SM}, it is possible to formally take limits to obtain the tradeoff relation
\begin{equation}\label{eq:PQ-case}
\N(\mM,Q) + \D(\mM,P) \ge \log \pi e\hbar
\end{equation}
for position $Q$ and momentum $P$. Moreover, defining MSDs $V_{\N}^Q$ and $V_{\D}^P$ as above, this bound further implies the Heisenberg-type noise-disturbance relation
$V_{\N}^Q\,V_{\D}^P \ge \hbar^2/4$,
in the same way that the usual Heisenberg uncertainty relation follows from the entropic uncertainty relation for $Q$ and $P$ \cite{BM75}. Note this is similar in form to, but stronger than, the relation recently obtained by Busch {\it et al.} \cite{BLW13}, as the latter is for the product of the maximum possible deviations, 
rather than for the product of the mean deviations. In both cases, however, the measures of noise and disturbance for position and momentum are purely formal, with no operational counterparts. Hence it appears that {\it state-dependent} noise-disturbance and joint-measurement relations \cite{H97,Oz03,Oz04,H04,Ozjoint,B13,WHPWP13} may be preferable for continuous observables.

Finally, while Theorem~\ref{theo:main} relies on the entropic uncertainty relation due to Maassen and Uffink, any such relation, such as those recently obtained by Puchala  {\it et al.} \cite{PRZ13} and Coles {\it et al.} \cite{CP13},  will similarly lead to a corresponding tradeoff relation for the information-theoretic noise and disturbance.  

\textit{Conclusion.}---We have obtained an information-theoretic characterisation of Heisenberg's Uncertainty Principle, which for the first time characterises the inherent degree to which a given measurement apparatus must disturb one observable to gain information about another observable, independently of the state of the system undergoing measurement.  Our proposed measures of noise and disturbance quantify the irreversible loss of correlations introduced by the measurement apparatus, and are invariant under operations such as relabelling of outcomes and invertible evolutions. Further, in the case of discrete observables, they can be operationally determined, as per Figs.~\ref{fig:noise} and~\ref{fig:disturbance}. 
Our main theorem has a number of generalisations, including extensions to tradeoff relations for joint measurements, degenerate observables, root-mean-square deviations and continuous observables, and yields stronger constraints than previous results in the literature.  

We believe the  above fundamental results will have diverse applications in quantum information theory and quantum metrology, for example, and will also motivate experimental confirmation of the strong information-theoretic form of the Heisenberg Uncertainty Principle presented here.  

\textit{Acknowledgements.}---We acknowledge R. Colbeck and G. Smith for helpful discussions, and thank A. Rastegin for pointing out an issue in an early version of the proof of Theorem~\ref{theo:main}. FB is supported by the Program for Improvement
of Research Environment for Young Researchers from SCF
commissioned by MEXT of Japan.
MJWH is supported by the ARC Centre of Excellence CE110001027. MO  is supported by the John Templeton Foundations, ID \#35771,
JSPS KAKENHI, No. 21244007, and MIC SCOPE, No.121806010. MMW acknowledges support from the
Centre de Recherches Math\'ematiques and from the Department of Physics and Astronomy at LSU.

\appendix

\bibliography{Ref}


\newpage
\onecolumngrid

\setcounter{page}{1}
\renewcommand{\thepage}{Supplemental Material -- \arabic{page}/8}
\setcounter{equation}{0}
\renewcommand{\theequation}{S.\arabic{equation}}

\section{Notation and quantum instruments}

In what follows, we restrict all classical random variables ($X$, $Z$,
$M$, etc) to have finite ranges ($\rv{X}=\{x\}$,
$\rv{Z}=\{z\}$, $\rv{M}=\{m\}$, etc), and we restrict all quantum systems ($S$,
$E$, etc) to be associated with finite-dimensional Hilbert spaces
($\sH_S$, $\sH_E$, etc).  

The apparatus $\mM$ may in general be represented by a quantum
instrument~\cite{DL70,D76,Ozawa1984}. Recall that a quantum instrument
is a collection $\{\mM^m\}_{m\in\rv{M}}$ of completely positive (CP) maps
$\mM^m$, each mapping linear operators on the input space $\sH_S$ into linear operators on the output space $\sH_{S'}$,
and such that the sum  map,
$\sum_m\mM^m$, is trace-preserving (TP), i.e., a channel.
The interpretation of a
quantum instrument as a measurement process goes as follows: given the
initial state $\rho$ of the system $S$ undergoing the measurement, the
$m$th outcome is observed with probability $p(m)=\Tr[\mM^m(\rho_S)]$,
in which case the measuring apparatus will return an output quantum
system $S'$ in state $\sigma^m_{S'}=\mM^m(\rho_S)/p(m)$. Any measurement process can be accommodated in the quantum instrument formalism~\cite{Ozawa1984}.

%

A convenient way to study quantum instruments from an
information-theoretic viewpoint is to represent the collection $\{\mM^m\}_{m\in\rv{M}}$
as a single CPTP map, also denoted by $\mM$, acting as follows:
\begin{equation}\label{eq:I-channel}
\mM(\rho_S)=\sum_m\mM^m_S(\rho_S)\otimes|m\>\<m|_M,
\end{equation}
where $M$ is now an additional quantum system encoding the classical  measurement outcome on 
orthonormal (and, therefore, perfectly distinguishable) `flag' states 
$|m\>\<m|$. In what follows, therefore, the symbol $\mM$ will be used to denote both the apparatus and the channel in~(\ref{eq:I-channel}).

The conditional input-output probability
distribution $p(m|\psi^x)$ for measurement outcome $m$ given input eigenstate $|\psi^x\>$ is independent of the quantum output $S'$, and is given by $p(m|\psi^x):=\Tr[\mM^m(|\psi^x\>\<\psi^x|)]$. Hence the joint probability distribution $p(x,m)$ in Eq.~(\ref{eq:pmx}) of the main text is given by
\begin{equation}\label{eq:pmxSM}
p(m,x)=\frac 1d \Tr[\mM^m(|\psi^x\>\<\psi^x|)].
\end{equation} 
Moreover, the joint probability distribution $p(x,m)$ in Eq.~(\ref{eq:pzz}) of the main text is given by
\begin{equation}\label{eq:pzzSM}
p(\hat z,z)=\frac 1d \, \<\varphi^{\hat z}|(\mE\circ\mM)(|\varphi^z\>\<\varphi^z|)|\varphi^{\hat z}\>.
\end{equation}

\section{Formalisation of Eq.~(\ref{eq:fano})}

Here we formalise the content of Eq.~(\ref{eq:fano}) as follows:

\begin{proposition}\label{prop:1}
There exists a guessing function $f:M\to\hat X$ such that 
\[
p_e\le\frac 12\N(\mM,X),
\]
where $$p_e:=\Pr\{\hat X\neq X\}=\sum_{x}\sum_{\hat x\neq x}p(\hat x,x)$$ is the total
error probability. On the other hand, for any guessing function
$f:M\to\hat X$,
\[
\N(\mM,X)\le h(p_e)+p_e\log(|\rv{X}|-1), 
\]
where $h(p_e)=-p_e\log p_e-(1-p_e)\log(1-p_e)$ is the binary entropy
computed from $p_e$. Therefore
\begin{equation*}
\N(\mM,X)\to 0\qquad\Longleftrightarrow\qquad p^{\operatorname{min}}_e:=\min_f\Pr\{\hat X\neq X\}\to 0.
\end{equation*}
\end{proposition}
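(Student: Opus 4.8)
The plan is to recognise the statement as a combination of two complementary and standard results from classical information theory: an upper bound on the optimal error probability in terms of the conditional entropy (the Hellman--Raviv bound~\cite{HR70}), which yields the first inequality, and the Fano inequality~\cite{F61,G68}, which yields the second. The claimed equivalence then drops out by chaining the two bounds together.

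For the first inequality I would take $f$ to be the maximum-a-posteriori estimator $\hat x(m)=\arg\max_x p(x\mid m)$, whose error probability is $p_e=\sum_m p(m)\,[1-\max_x p(x\mid m)]$, and reduce the claim to the pointwise bound $1-\max_x p(x\mid m)\le\tfrac12\, H(X\mid M=m)$ for each fixed $m$. Averaging this against $p(m)$ then gives $p_e\le\tfrac12\, H(X\mid M)=\tfrac12\,\N(\mM,X)$, since both $p_e$ and $H(X\mid M)$ are the $p(m)$-averages of the corresponding pointwise quantities. To prove the pointwise bound, write $q:=\max_x p(x\mid m)$, so that the conditional error equals $1-q$. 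When $q\ge\tfrac12$, a majorisation argument shows $H(X\mid M=m)\ge h(q)=h(1-q)$, and concavity of $h$ on $[0,\tfrac12]$ (the chord joining $h(0)=0$ and $h(\tfrac12)=1$ is the line $2p$) gives $h(1-q)\ge 2(1-q)$, which is exactly what is needed. When $q<\tfrac12$, I would instead use the min-entropy bound $H(X\mid M=m)\ge-\log q$ together with the elementary inequality $-\tfrac12\log q\ge 1-q$ valid on $(0,\tfrac12)$.

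For the second inequality I would apply Fano's inequality in the usual way. Since $\hat X=f(M)$, the data-processing inequality gives $\N(\mM,X)=H(X\mid M)\le H(X\mid\hat X)$, so it suffices to bound the latter. Introducing the error indicator $E:=\mathds{1}\{\hat X\neq X\}$ and expanding $H(X,E\mid\hat X)$ by the chain rule in two ways, one obtains $H(X\mid\hat X)\le h(p_e)+p_e\,H(X\mid\hat X,E=1)$, using that $E$ is a deterministic function of $(X,\hat X)$ and that $H(X\mid\hat X,E=0)=0$. Bounding $H(X\mid\hat X,E=1)\le\log(|\rv{X}|-1)$, since only $|\rv{X}|-1$ values of $X$ remain possible once an error is known to have occurred, yields $\N(\mM,X)\le h(p_e)+p_e\log(|\rv{X}|-1)$ for every $f$.

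The equivalence then follows immediately: if $\N(\mM,X)\to 0$ the first inequality forces $p^{\operatorname{min}}_e\to 0$, while if $p^{\operatorname{min}}_e\to 0$ then evaluating the second inequality at the optimal $f$ and using $h(p_e)\to 0$ and $p_e\log(|\rv{X}|-1)\to 0$ forces $\N(\mM,X)\to 0$. I expect the main obstacle to be the first (Hellman--Raviv) inequality, and specifically the pointwise estimate in the regime $q<\tfrac12$, where the clean binary-entropy and concavity argument no longer applies and one must instead invoke the min-entropy bound; the Fano direction and the concluding equivalence are routine.
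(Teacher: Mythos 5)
Your proposal is correct and follows exactly the route the paper takes: the first inequality is the Hellman--Raviv bound and the second is Fano's inequality, which the paper simply cites from \cite{HR70,G68,HV10} and \cite{F61} respectively. The only difference is that you supply self-contained proofs of both (your case analysis on $q=\max_x p(x\mid m)$ for the pointwise Hellman--Raviv estimate and your chain-rule derivation of Fano both check out), whereas the paper's proof consists of the citations alone.
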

\begin{proof}
  For the first relation, see Gallager~\cite{G68}, 
  Hellman~\cite{HR70}, or Ho and Verd\'u~\cite{HV10}. The second
  relation is Fano's inequality~\cite{F61}.
\end{proof}

\section{Proof of Theorem~\ref{theo:main}}

Both noise $\N(\mM,X)$ and disturbance $\D(\mM,Z)$ are defined in terms of optimal guessing strategies: for the noise, we check how well one can guess the value of $X$ from the measurement outcome $M$, while for the disturbance, we look instead at how well one can guess the value of $Z$ from $M$ and $S'$ together. As these tasks can be performed simultaneously, since we can copy the classical random variable $M$, we can actually reformulate our two correlation experiments as a joint estimation of two observables, $X$ and $Z$, by means of a single observation. As a consequence, we can directly apply known entropic uncertainty relations \cite{BM75,Maassen,Hall1995,KP02,PRZ13,CP13} and find a lower bound on the sum of $\N(\mM,X)$ and $\D(\mM,Z)$, as anticipated by Theorem~\ref{theo:main}.

\begin{proof}
First, we introduce an auxiliary quantum system $R$ (i.e., a \emph{reference} copy), with a Hilbert space isomorphic to that of the input system $S$, i.e., $\sH_R\cong\sH_S$. We then fix  orthonormal basis sets $\{|i\>_S\}$ and $\{|i\>_R\}$, for  $\sH_S$ and $\sH_R$ respectively, and define the maximally entangled state
	\[
	|\Phi^+_{RS}\>:=\frac{1}{\sqrt{d}}\sum_{i=1}^{d}|i_R\>\otimes|i_S\>.
	\]
	By direct inspection, it is easy to check that, for any linear operator $X\in\bound(\sH_R)$,
	\[
	\Tr_R[(X_R\otimes\openone_S)\ |\Phi^+_{RS}\>\<\Phi^+_{RS}|]=\frac 1d X_S^T,
	\]
	where the exponent $T$ denotes the transposition made with respect to the chosen basis $\{|i_S\>\}$. Therefore, we have the so-called `ricochet' property
	\[
	\frac 1d|\psi^x_S\>\<\psi^x_S|=\Tr_R[(|\psi^x_R\>\<\psi^x_R|^T\otimes\openone_S)\ |\Phi^+_{RS}\>\<\Phi^+_{RS}|],
	\]
	and analogously for $|\varphi^z_S\>\<\varphi^z_S|$, relating observables on $S$ to their transpositions on $R$.
	
	We then notice that (as anticipated above) the two correlation experiments defining noise and disturbance can be viewed as a \emph{single} estimation producing a pair of random variables $U=(W,W')$. Here, we may take $W$ to be just a copy of $M$, while  $W'$ is the best possible estimate $\hat Z$ for $Z$ (corresponding to an optimal correction operation $\mE$ in Fig.~\ref{fig:disturbance}). Let therefore $\{\Pi^{u}\}_{u\in\rv{U}}$ be the POVM corresponding to the estimation of $U$, so that $p(u,x)=\frac 1d\Tr[\Pi^{u}\ |\psi^x_S\>\<\psi^x_S|]$ and $p(u,z)=\frac 1d\Tr[\Pi^{u}\ |\varphi^z_S\>\<\varphi^z_S|]$. Notice that the POVM $\{\Pi^{u}\}_{u\in\rv{U}}$ incorporates any and all preceding correction procedures (i.e., the action of the correction channel $\mE$ and optimal guessing function). Exploiting the `ricochet' property explained above, we therefore arrive at
	\[
	p(u,x)=\Tr[(|\psi^x_R\>\<\psi^x_R|^T\otimes\Pi^{u}_S)\ |\Phi^+_{RS}\>\<\Phi^+_{RS}|],
	\]
	and
	\[
	p(u,z)=\Tr[(|\varphi^z_R\>\<\varphi^z_R|^T\otimes\Pi^{u}_S)\ |\Phi^+_{RS}\>\<\Phi^+_{RS}|].
	\]
Note this property also implies that the operational joint distribution $p(m,x)$, used to define the information-theoretic noise in Definition~\ref{def:noise},  may alternatively be obtained by measuring $X^T$ on the reference $R$ and $\mM$ on the system $S$ (and similarly for the information-theoretic disturbance).

Now, by defining the ensemble of reference states ${\cal U}\equiv \{\rho^{u}_R; p(u)\}$ via
	\[
	p(u)\rho^{u}_R:=\Tr_S[(\openone_R\otimes\Pi^{u}_S)\ |\Phi^+_{RS}\>\<\Phi^+_{RS}|],
	\]
	we have $p(u,x)=p(u)\Tr[|\psi^x_R\>\<\psi^x_R|^T\ \rho^{u}_R]$ and $p(u,z)=p(u)\Tr[|\varphi^z_R\>\<\varphi^z_R|^T\ \rho^{u}_R]$. We  then find, using the Maassen-Uffink entropic uncertainty relation \cite{Maassen}, that
	\[ H(X|U)+H(Z|U) = \sum_{u} p(u) \left[ H_{\rho^{u}_R}(X^T)+H_{\rho^{u}_R}(Z^T) \right] \ge -\log c', \]
	where $H_\rho(A)$ is the entropy of observable $A$ for state $\rho$, and $c':=\max_{x,z}\Tr[|\psi^x_R\>\<\psi^x_R|^T\ |\varphi^z_R\>\<\varphi^z_R|^T]$. Using invariance of the trace under transposition of its argument, $c'=c$, and the (classical) data-processing inequality, i.e.,  $H(X|U)=H(X|M\hat Z)\le H(X|M)=\N(\mM,X)$ and $H(Z|U)=H(Z|M\hat Z)\le H(Z|\hat Z)=\D(\mM,Z)$, we finally obtain Eq.~(\ref{eq:tradeoff}).
\end{proof}

\section{Proof of Eq.~(\ref{eq:quantum-bound})}

\label{app:UPPQM}

As said above, in what follows we will describe quantum measurement
processes as channels (i.e., CPTP maps)
$\mM:\bound(\sHS)\to\bound(\sH_{S'})\otimes\bound(\sH_{M})$.

A fundamental tool to study channels from an information-theoretic
viewpoint is provided by the so-called \emph{Stinespring isometric
	dilation}, that is, an ancillary system together with an isometry
	$V$ (i.e., such that $V^\dag V=\openone$), in terms of which the given
	channel can be obtained as an open evolution by tracing over the
	ancillary degrees of freedom~\cite{S55}. It is known that, whenever
	the output of a channel contains a classical register, like the system
	$M$ in Eq.~\eqref{eq:I-channel}, the ancillary system holds a
	perfectly correlated copy of such a register. Therefore, in our case,
	the Stinespring isometric dilation will be an operator $V: \sH_{S} \to
	\sH_{S'}\otimes\sH_{M}\otimes\sH_{E}\otimes\sH_{\bar M}$, where the
	ancillary systems $E$ and $\bar M$ play the roles, respectively, of
	the environment (which can be thought of as comprising also the probe's degrees of freedom, according to the indirect measurement model~\cite{Oz04}) and of the environment's perfect copy of $M$.
	
	\begin{figure}[h]
		\begin{center}
			\includegraphics[width=8cm]{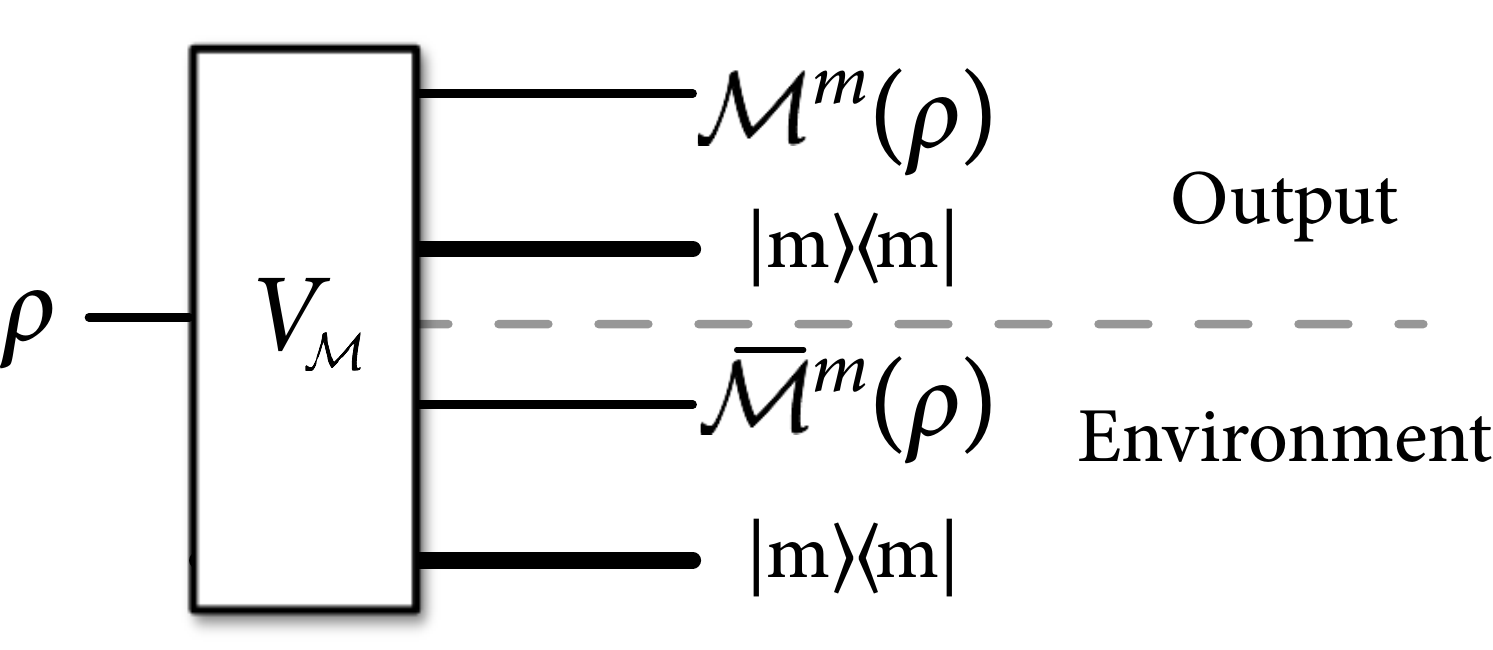}
			\end{center}
			\caption{Stinespring isometric dilation: the environment gets full
				information about the output classical random variable. The CP maps
				$\bar\mM^m$ denote the quantum output to the environment.}
				\label{fig:stine}
				\end{figure}

%

The proof proceeds as follows: as already done in proving Theorem~\ref{theo:main}, we introduce an auxiliary reference system $R$, with $\sH_R\cong\sH_S$, and define the maximally entangled state 
\[
|\Phi^+_{RS}\>=\frac{1}{\sqrt{d}}\sum_i|i_R\>\otimes|i_S\>.
\]
Then, the Stinespring isometry
	$V:\sHS\to\sH_{S'}\otimes\sH_{M}\otimes\sH_E\otimes\sH_{\bar M}$ providing a
	dilation of the channel $\mM$ is applied on $\sHS$ alone, resulting in
	\[
	|\Omega_{RS'ME\bar M}\>:=(\openone_R\otimes V_S) |\Phi^+_{RS}\>.
	\]
	Notice that, by definition of Stinespring isometry, $\Tr_{E\bar M}\left[|\Omega_{RS'ME\bar M}\>\<\Omega_{RS'ME\bar M}|\right]
	=(\id_R\otimes\mM_S)(|\Phi^+_{RS}\>\<\Phi^+_{RS}|)$, i.e. the state $|\Omega_{RS'ME\bar M}\>$ is a purification of the mixed state
	\[
	\omega_{RS'M}:=\sum_m(\id_R\otimes\mM^m_S)(|\Phi^+_{RS}\>\<\Phi^+_{RS}|)\otimes|m\>\<m|_M.
	\]
	Then, while the system $E$ purifies systems $R$ and $S'$, the system $\bar M$ purifies the `classical outcome' system $M$, i.e. the joint pure state $|\Omega_{RS'ME\bar M}\>$ can always be thought of as being in the following form:
	\[
	|\Omega_{RS'ME\bar M}\>=\sum_m|\Omega^m_{RS'E}\>\otimes|m_M\>\otimes|m_{\bar M}\>,
	\]
	where $\Tr_{E}[|\Omega^m_{RS'E}\>\<\Omega^m_{RS'E}|]=(\id_R\otimes\mM^m_S)(|\Phi^+_{RS}\>\<\Phi^+_{RS}|)$ for all $m$. From the above equation it is also clear the reason why the additional system $\bar M$ is automatically perfectly correlated with $M$. (In fact, there is an extra unitary degree of freedom on the purifying systems $E$ and $\bar M$; however, since we are interested in entropic quantities that are unitarily invariant, we can, without loss of generality, work with the particular Stinespring isometry given above.)
	
	We now apply the entropic uncertainty relation in the presence of quantum memory~\cite{BCCRR10} for pair of observables $X$ and $Z$, replacing the tripartition $A-B-E$ appearing in Corollary~2, Supplementary Information of Ref.~\cite{BCCRR10}, with the tripartition $R-S'M-E\bar M$ we have here; doing so, we directly obtain the following relation:
	\begin{equation}\label{eq:berta}
	H(Z|S'M)+H(X|E\bar M)\ge-\log c
	\end{equation}
(see also Corollary 5 of Ref.~\cite{CYGG11} for a related result).	Notice that in the formula above both $H(Z|S'M)$ and $H(X|E\bar M)$ are conditional quantum entropies. Eq.~(\ref{eq:quantum-bound}) is finally recovered from~(\ref{eq:berta}) by quantum data-processing inequality, i.e. $H(X|E\bar M)\le H(X|\bar M)=H(X|M)$, where the last equality holds because $M$ and $\bar M$ are perfectly correlated.

\section{Case of degenerate observables}

As anticipated in the main text, we can easily generalise the tradeoff relation of Theorem~\ref{theo:main} to degenerate observables $X$ and $Z$. However, in this case, the correlation experiments need to be slightly modified as follows. First, an eigenvalue $\xi_x$ of $X$ is drawn at random according to the \textit{a priori} distribution $p(x)=d_x/d$, where $d_x$ is the degeneracy of $\xi_x$. This means that eigenvalues with larger degeneracy are more likely to be chosen: this is a very natural requirement if we want to give a state-independent characterisation of a given measuring apparatus. Then, conditional on the eigenvalue chosen, a pure state $|\psi^x\>$ is drawn at random, according to the uniform (Haar) distribution on the corresponding eigenspace. This is also a very natural assumption for us, since we are interested in evaluating the average performance of the measuring apparatus. This means that, in the presence of degeneracy, the effective state input through $\mM$ in Fig.~\ref{fig:noise} is represented by $X_x/d_x$, where $X_x$ denotes the orthogonal projector onto the eigenspace corresponding to eigenvalue $\xi_x$. For operational purposes, it is also worth remarking that the same effective input state, for eigenvalue $x$, is obtained by transmitting any of a set of mutually orthogonal pure eigenstates with equal prior probabilities.

Similarly, if eigenvalue $\zeta_z$ of $Z$ has degeneracy $d'_z$, then the protocol in Fig.~\ref{fig:disturbance} is modified to input states $Z_z/d'_z$ with prior probability $d'_z/d$, where $Z_z$ denotes the orthogonal projector onto the eigenspace corresponding to eigenvalue $\zeta_z$. 

The proof of Theorem~\ref{theo:main} may now be followed as before, but using the Krishna and Parthasarathy entropic uncertainty relation for degenerate observables \cite{KP02}, to give the tradeoff relation (\ref{eq:tradeoff}) with $c$ generalised to 
\[c' = \max_{x,z} \|X_xZ_z\|^2_\infty, \]
where $X_x$ and $Z_z$ denote the projection operators onto the eigenspaces of $X$ and $Z$ for eigenvalues $\xi_x$ and $\zeta_z$ respectively. We recall here that, for any linear operator $A$, its \textit{infinity norm} $\|A\|_\infty$ is defined as the largest of its singular values. The infinity norm therefore coincides with the standard operator norm defined as $\|A\|=\max_{v:\|v\|=1}\|Av\|$. Moreover, in the case $X_x=|\psi^x\>\<\psi^x|$ and $Z_z=|\varphi^z\>\<\varphi^z|$, one has $\|X_xZ_z\|^2_\infty=|\<\psi^x|\varphi^z\>|^2$, i.e., the non-degenerate case is recovered.

As a final remark, we notice that, in the presence of degeneracy, the interpretation of disturbance in terms of average fidelity of correction (as we presented it after Definition~\ref{def:dist}) becomes problematic. It is however still true that our measure of disturbance quantifies how much information about eigenvalue $\zeta_z$ is left available after the measurement $\mM$ has been performed.

\section{Noise and disturbance relations for mean square deviations} \label{app:msd}

Here the noise-disturbance relation (\ref{variance}) for mean square deviations 
\begin{equation}
V_{\N}^X := \sum_{m,x} p(m,x) [\hat x - \xi_x]^2 ,~~~~ V_{\D}^Z := \sum_{\hat z,z} p(\hat z,z) [\hat z -\zeta_z]^2 
\end{equation}
 is derived, and it is shown that these quantities are equal to the conventional, root-mean-square based notions of noise and disturbance for the case of a maximally mixed system state.  

First, notice that if a random variable $N$ takes integer values, then the entropy of $N$ may be bounded in terms of its variance by \cite{Mow}
\[ H(N)  \le \frac{1}{2} \log \left\{2\pi e\left[{\rm Var} N+\frac{1}{12}\right]\right\} . \]
Hence, if the spacing between eigenvalues of $X$ is a multiple of $s_X$, i.e., $\xi_x = \xi+s_X n_x$ for some fixed $\xi$ and integer $n_x$, then it immediately follows that
\begin{equation} \label{entvar}
H(X) \le \frac{1}{2} \log\left\{ 2\pi e\left[{\frac{1}{s_X^2} \rm Var} X +\frac{1}{12}\right] \right\}
\end{equation}
using $H(X)=H(N)$ and ${\rm Var} X=s_X^2{\rm Var} N$, where $N$ ranges over the $n_x$.

It follows, writing the estimate of $X$ as $\hat x=f(m)$ (see main text), and defining $\bar{x}_m:=\sum_x x\,p(x|m)$, that
\begin{align*}
V_{\N}^X+\frac{s_X^2}{12}  &:= \sum_{m,x} p(m,x)\,[f(m)-x]^2 +\frac{s_X^2}{12}\\
&= \sum_m p(m)\, \sum_x p(x|m) [f(m)-x]^2 +\frac{s_X^2}{12}\\
&\ge \sum_m p(m)\, \sum_x p(x|m) (x-\bar{x}_m)^2+\frac{s_X^2}{12}\\
&= \sum_m p(m)\, \left[{\rm Var}_m X+\frac{s_X^2}{12}\right]\\
& \ge \frac{s_X^2}{2\pi e} \sum_m p(m)\,2^{2H_m(X)}\\
&\ge \frac{s_X^2}{2\pi e} 2^{\left[ 2\sum_m p(m) H_m(X) \right]}\\
&= \frac{s_X^2}{2\pi e} 2^{2H(X|M)}
\end{align*}
where ${\rm Var_m X}$ and $H_m(X)$ denote the variance and entropy of $X$ for a fixed value of $m$, and the property $\langle (A-\alpha)^2\rangle\ge {\rm Var}A$ for any random variable $A$ and the convexity of $2^x$ have been used.

In precisely the same manner, one also has the inequality
\[ V_{\D}^Z +\frac{s_Z^2}{12} \ge  \frac{s_Z^2}{2\pi e} 2^{2H(Z|\hat Z)}, \]
when the spacing between the eigenvalues of $Z$ is a multiple of $s_Z$.  Combining this with the above inequality then gives
\begin{align} \nonumber
\left[V_{\N}^X+\frac{s_X^2}{12} \right]\,\left[ V_{\D}^Z +\frac{s_Z^2}{12} \right]&\ge  \left(\frac{s_X s_Z}{2\pi e}\right)^2 2^{2H(X|M)+2H(Z|\hat Z)}\\
&\ge \left(\frac{s_X s_Z}{2\pi e \, c}\right)^2\label{later}  ,
\end{align}
as claimed in Eq.~(\ref{variance}), where the last line follows as per the proof of the theorem in the main text.  

For example, take $X$ and $Z$ to be the usual qubit polarisation observables corresponding to Pauli operators $\sigma_X$ and $\sigma_Z$.  Then $s_X=s_Z=2$ and $c=1/2$, yielding
\begin{equation} \label{xyent}
\left[V_{\N}^X+\frac{1}{3} \right]\,\left[ V_{\D}^Z +\frac{1}{3} \right] \ge\frac{16}{\pi^2 e^2} \approx 0.219. 
\end{equation}
Hence, the noise and disturbance, as characterised by $V_{\N}^X$ and $ V_{\D}^Z $, can never both vanish, since $1/9\approx 0.111$.  Even stronger relations are possible in this regard.  For example, $V_{\N}^X=0$ implies that $p(m,x)=p(m)\delta_{x,f(m)}$, and hence that $H(X|M)=0$.  Since $H(Z|\hat{Z})\le \log 2=1$ for qubit observables, the information-theoretic noise-disturbance relation of Theorem~\ref{theo:main} then implies that $H(Z|\hat{Z})=1$, which is turn is only possible if $p(\hat z,z)=p(\hat z)/2$.  It follows that the MSD disturbance $V_{\D}^Z$ takes its maximum possible value, i.e., 
\begin{equation}  
V_{\D}^Z = 2\qquad{\rm for}\qquad V_{\N}^X=0. 
\end{equation}

The above tradeoff relations are quite strong in comparison to related tradeoff relations in \cite{Oz03,B13}.
To see this, note first that the root-mean-square deviations correspond to the conventional definitions of noise and disturbance for the case of a maximally-mixed system state. In particular, the POVM $\{E_m\}$ corresponding to the measurement outcomes of the apparatus $\mM$ follows from 
\[ p(m) = \Tr[\mM^m(\rho_S)] = \Tr[\rho_S (\mM^m)^*(\openone_S)] = \Tr[\rho_S E_m] \]
as $E_m=(\mM^m)^*(\openone_S)$, where $\phi^*$ denotes the dual of linear map $\phi$.  It follows that $p(x,m)$ in Definition 1 of the main text can be rewritten as
\[ p(m,x) = \frac 1d\Tr[\mM^m(|\psi^x\>\<\psi^x|)] = \frac 1d\Tr[|\psi^x\>\<\psi^x| (\mM^m)^*(\openone_S) = \frac 1d\Tr[|\psi^x\>\<\psi^x| E_m]. \]
Hence, one has 
\begin{eqnarray} \nonumber
V_{\N}^X &=& \sum_{m,x} p(m,x) [\hat x - \xi_x]^2 = \sum_{m,x} \frac{1}{d} \Tr[E_m |\psi^x\>\<\psi^x| [(f(m)-\xi_x)]^2\\
&=& \frac 1d \sum_m \Tr[E_m [f(m)-X]^2] = \epsilon(X,d^{-1} \openone_S)^2,
\end{eqnarray}
where $\epsilon(X,\rho_S)$ is the measure of noise proposed by Ozawa \cite{Oz03,Oz04}.  Similarly, it may be shown that
\begin{equation}  
V_{\D}^Z = \eta(Z,d^{-1} \openone_S)^2, 
\end{equation}
where $\eta(Z,\rho_S)$ is the measure of disturbance proposed by Ozawa. (We note that, even though these quantities are equivalent in this particular case, their physical interpretations remain different.)

It follows from the above results that, for example, for qubit polarisations $X$ and $Z$
\begin{equation}
\eta(Z,d^{-1} \openone_S) =\sqrt{2} \qquad{\rm for}\qquad \epsilon(X,d^{-1} \openone_S)=0.
\end{equation}
In contrast, known disturbance relations for $\epsilon(X,\rho_S)$ and $\eta(Z,\rho_S)$ place {\it no} restrictions on these quantities for the maximally mixed state $\rho_S=d^{-1}\openone_S$.

\section{Position-Momentum case}

Here we show how to derive the noise-disturbance uncertainty relations 
\begin{equation} \label{pqrelations}
	\N(\mM,Q) + \D(\mM,P) \ge \log \pi e\hbar,~~~~~V_{\N}^Q\,V_{\D}^P \ge \hbar^2/4
\end{equation}
in the main text, for conjugate position and momentum observables $Q$ and $P$. The {\it joint-measurement} uncertainty relations
\[ \N(\mM,Q) + \N(\mM,P) \ge \log \pi e\hbar,~~~~~V_{\N}^Q\,V_{\N}^P \ge \hbar^2/4 ,\]
analogous to Eq.~(\ref{eq:intrinsic-Hall}) of the main text, can be similarly derived for any joint measurement $\mM$ of $Q$ and $P$.

We first define an information-theoretic noise,  $\mN(\mM,Q)$, via a modification of the experiment in Fig.~\ref{fig:noise} of the main text.  In particular, a Gaussian state $|\psi^q\rangle$, with mean position $\langle Q\rangle=q$ and fixed position variance $(\Delta Q)^2=v$, is input to the measuring apparatus $\mM$ with  prior probability density $\wp(q)$ (specified below).  As $v\to0$ these input states approach eigenkets of $Q$, and so a perfect measurement of $Q$ by the apparatus can perfectly  distinguish the input states in this limit. Let $\wp(m|\psi^q)$ denote the conditional probability density for outcome $m$ given input state $|\psi^q\rangle$.  Hence, the corresponding information-theoretic noise is suitably quantified by the limiting conditional differential entropy
\begin{equation} \label{qnoise}
	\N(\mM,Q) := \lim_{v\to 0} H(Q|M) = \lim_{v\to 0} H(QM)-H(M),
\end{equation}
generalising Definition 1 of the main text.  
In particular, $\N(\mM,Q)$ is a measure of the average uncertainty of $Q$, given the outcome of the estimate, and approaches its minimum possible value ($-\infty$) in the limit that $\mM$ is a perfect position measurement.  Note that the corresponding entropic length, $2^{\N(\mM,Q)}$, is a more direct measure of the residual uncertainty of $Q$, having units of position and a minimum value of 0 in the limit of a perfect position measurement \cite{H99}.  Note also that the above definition is operational to the extent that for a `binned' position measurement $\mM$, with resolution $\delta q$, we have $\N(\mM,Q)\approx H(Q|M)\approx \log \delta q$ for finite $v\ll \delta q$.

For $\N(\mM,Q)$ to be uniquely defined, the prior density $\wp(q)$ must be specified to allow calculation of $H(Q|M)$.  To characterise the performance of the measurement over all possible values of $Q$, without bias, this density should become uniform in a suitable limit (analogously to the uniform distribution over eigenstates in Definition 1).  We therefore choose
\begin{equation} \label{pq}
	\wp(q) := (2\pi \bar v)^{-1/2} e^{-q^2/2\bar v}
\end{equation}
where $\bar v$ is specified further below and approaches $\infty$ in the limit of interest.

The information-theoretic disturbance, $\D(\mM,P)$, is similarly defined via a modification of the experiment in Fig.~\ref{fig:disturbance} of the main text.
In particular, a Gaussian state $|\phi^p\>$, with mean momentum $p$ and fixed momentum variance $(\Delta P)^2=w$, is input to the apparatus with prior probability density $\tilde \wp(p)$; subjected to an error correction process; and estimated via an ideal measurement of $P$.  Denoting the optimal estimate by $\hat P$, the corresponding information-theoretic disturbance is then
\begin{equation} \label{pdist}
	\D(\mM,P) := \lim_{w\to 0} H(P|\hat P) .
\end{equation}
This measure quantifies  the residual uncertainty of the momentum following the estimate, analogously to Eq.~(\ref{qnoise}), and clearly generalises Definition 2 of the main text.  A suitable prior density $\tilde \wp(p)$ is 
\begin{equation} \label{pp}
	\tilde \wp(p) := (2\pi \bar w)^{-1/2} e^{-p^2/2\bar w} ,
\end{equation}
where $\bar w$ is specified further below and approaches $\infty$ in the limit of interest ($v,w\to 0$), ensuring a uniform distribution in this limit. 

It is of interest to note that the two experiments above could be carried out simultaneously if the values of $v$ and $w$ were such that $vw= \hbar^2/4$, as in this case one could use minimum-uncertainty Gaussian states as input states, having position and momentum variances $v$ and $w$ respectively.  They could  also be carried out simultaneously for $vw>\hbar^2/4$, by using suitable mixed Gaussian input states rather than pure states.  Hence, the two experiments will be `complementary' whenever 
\begin{equation} \label{complement}
	v\, w< \frac{\hbar^2}{4},
\end{equation}
and in particular in the limit of interest, $v,w\to 0$.  This limit may be interpreted as corresponding to the input of uniform distributions of eigenkets of $Q$ and $P$, in direct analogy to the discrete case.

To obtain the noise-disturbance uncertainty relations (\ref{pqrelations}),  we follow a procedure similar to the earlier proof of the main Theorem.  The main differences are that (i) the joint state $|\Phi^+_{RS}\>$ is replaced by a two-particle Gaussian state, that approaches an Einstein-Podolsky-Rosen state as $v,w\to 0$ (i.e., an eigenket of relative position and total momentum);  and (ii) the Maassen-Uffink relation is replaced by a suitable entropic uncertainty relation for position and momentum.

In particular, introducing an auxilary reference system $R$, consider the two-system Gaussian state $|\Psi_{RS}\rangle$ with position representation
\[ \Psi_{RS}(x,y) := K e^{-\frac{(x-y)^2}{4f} }\,e^{-\frac{g(x+y)^2}{4\hbar^2} }  \]
for some $f,g>0$ (to be specifed below), where $K$ is a normalisation constant.  This state satisfies
\[ \< Q_R - Q_S\> = 0 = \< P_R + P_S\>,~~~~~{\rm Var}(Q_R-Q_S) = f,~~~~~{\rm Var}(P_R+P_S) = g, \]
and so approaches an eigenstate of relative position and total momentum in the limit $f,g\to0$.
It is straighforward to calculate that jointly measuring $\mM$ on $S$ and $Q^\star:=\alpha\, Q_R$ on $R$, with
\[ \alpha:= \frac{1+\frac{fg}{\hbar^2}}{1-\frac{fg}{\hbar^2} },\]
yields the same joint probability density, for outcomes $M=m$ and $Q^\star=q$,  as the first experiment above, i.e., 
\begin{equation} \label{pmq}
	p_{\Psi_{RS}}(m,q) = \wp(m,q) :=\wp(q)\,\wp(m|\psi^q) ,
\end{equation}
provided that $f$, $g$ and $\bar{v}$ are chosen to satisfy
\begin{equation}  \label{barv}
	v = \frac{f}{1+\frac{fg}{\hbar^2}},~~~~\bar v = \alpha^2 \frac{\hbar^2}{g} \left[ 1+\frac{fg}{\hbar^2}\right] .
\end{equation}
Similarly, jointly measuring $\hat Z$ on $S$ and $P^\star:=-\alpha P_R$ on $R$ yields the same joint probability density for outcomes $\hat P=\hat p$ and $P^\star=p$ as the second experiment above, i.e., 
\begin{equation} \label{ppp}
	p_{\Psi_{RS}}(\hat p,p) = \wp(\hat p, p) := \tilde \wp(p)\,\wp(\hat p|\phi^p) ,
\end{equation}
provided that $f$, $g$ and $\bar{w}$ are chosen such that
\begin{equation} \label{barw}
	w = \frac{g}{1+\frac{fg}{\hbar^2}},~~~~~~\bar w = \alpha^2 \frac{\hbar^2}{2f} \left[ 1+\frac{fg}{\hbar^2}\right] .
\end{equation}
The observables  $Q^\star$ and $P^\star$ defined above will play roles similar to that of $X^T$ and $Z^T$ in the proof of Theorem 1.

Equations~(\ref{barv}) and (\ref{barw}) can be explicitly solved to give $\bar v$, $\bar w$, $f$, $g$ and $\alpha$ in terms of $v$ and $w$, providing that $vw\leq \hbar^2/2$ (which of course holds in the limits of interest, $v,w\to0$, and is also guaranteed for finite $v$ and $w$ if the experiments are `complementary' in the sense of Eq.~(\ref{complement})).  As can already be seen from these equations, one finds $f,g\to 0$, $\bar v,\bar w\to\infty$ and $\alpha\to1$ in the limit $v,w\to 0$.  Explicitly, one has, for example, 
\[ f = \frac{v}{\sqrt{1-vw/\hbar^2}}, ~~~~g = \frac{w}{\sqrt{1-vw/\hbar^2}}, ~~~~~\alpha =\frac{1}{1-2vw/\hbar^2}, ~~~~\bar v = \frac{\hbar^2}{w} (1-2vw/\hbar^2)^{-2}(1-vw/\hbar^2)^{-1/2}. \]

Similarly to the proof of Theorem 1, $\mM$ and $\hat P$ correspond to a simultaneous measurement of some joint observable $U$ on the system, with outcome $u\equiv(m,\hat p)$ and POVM $\{\Pi^u\}$.  Hence, defining an ensemble of reference states ${\cal U}\equiv \{\rho^{u}_R; \wp(u)\}$ as before (with $|\Phi^+_{RS}\>$ replaced by $|\Psi_{RS}\>$), we have the corresponding chain
\[ H(Q^\star|U) + H(P^\star|U) = H(\alpha Q_R|U) + H(-\alpha P_R|U) = \log \alpha^2 + H(Q_R|U) + H(P_R|U) \geq \log \alpha^2\pi e\hbar ,\]
using the usual entropic uncertainty for position and momentum \cite{BM75}. But from Eqs.~(\ref{pmq}) and (\ref{ppp}) and the classical data-processing inequality, one also has $H(Q^\star|U)=H(Q^\star|M,\hat P)\leq H(Q^\star|M)=H(Q|M)$ and $H(P^\star|U)=H(P^\star|M,\hat P)\leq H(P^\star|\hat P) = H(P|\hat P)$. Hence,
\begin{equation} \label{last}
	H(Q|M) + H(P|\hat P) \geq \log \alpha^2 \pi e\hbar .    
\end{equation}
Recalling that $\alpha\to 1$ in the limit $v,w\to 0$, we immediately obtain the first relation in Eq.~(\ref{pqrelations}) via definitions (\ref{qnoise}) and (\ref{pdist}), as desired.  

Finally,  the second relation in Eq.~(\ref{pqrelations}) is obtained by following essentially the same steps used in the proof of Eq.~(\ref{later}) above (see also the derivation of the Heisenberg inequality in Ref.~\cite{BM75}), to show that $\< (\hat q-q)^2\> \geq (2\pi e )^{-1} 2^{2H(Q|M)}$ and   $\< (\hat p-p)^2\> \geq (2\pi e )^{-1} 2^{2H(P|\hat P)}$, and again using Eq.~(\ref{last}) in the limit $v,w\to0$.  

\end{document}